\documentclass{llncs}
\usepackage{rotating}
\usepackage{amsmath}
\usepackage{amsfonts}
\usepackage{amssymb}
\usepackage{graphics}

\begin{document}
\frontmatter          


\pagestyle{headings}  

\mainmatter
\title{Constructing a Knowledge Base for Gene Regulatory Dynamics by Formal Concept Analysis Methods}
\titlerunning{Constructing a Gene Regulatory Knowledge Base}
\author{Johannes Wollbold\inst{1} \inst{2} \and
Reinhard Guthke\inst{2} \and Bernhard Ganter\inst{1}}
\authorrunning{Johannes Wollbold et al.}
\institute{University of Technology, Institute of Algebra, Dresden, Germany\\
\texttt{http://www.math.tu-dresden.de/alg/algebra.html}\\
\email{jwollbold@gmx.de} \and Leibniz Institute for Natural Product Research and Infection Biology
- Hans-Kn\"oll-Institute (HKI) Jena, Germany}
\maketitle

\begin{abstract}
Our aim is to build a set of rules, such that reasoning over temporal dependencies within gene
regulatory networks is possible. The underlying transitions may be obtained by discretizing
observed time series, or they are generated based on existing knowledge, e.g. by Boolean networks
or their nondeterministic generalization. We use the mathematical discipline of formal concept
analysis (FCA), which has been applied successfully in domains as knowledge representation, data
mining or software engineering. By the \textit{attribute exploration} algorithm, an expert or a
supporting computer program is enabled to decide about the validity of a minimal set of
implications and thus to construct a sound and complete knowledge base. From this all valid
implications are derivable that relate to the selected properties of a set of genes. We present
results of our method for the initiation of sporulation in \textit{Bacillus subtilis}. However the
formal structures are exhibited in a most general manner. Therefore the approach may be adapted to
signal transduction or metabolic networks, as well as to discrete temporal transitions in many
biological and nonbiological areas.\\
\textbf{Keywords:} complete lattices, reasoning, temporal logic, gene expression
\end{abstract}

\section{Introduction}
As the mathematical methodology of formal concept analysis (FCA) is little known within systems
biology, we give a short overview of its history and purposes. During the early years 1980, FCA
emerged within the community of set and order theorists, algebraists and discrete mathematicians.
Its first aim was to find a new, concrete and meaningful approach to the understanding of complete
lattices (ordered sets such that for every subset the supremum and the infimum exist). The
following discovery showed to be very fruitful: Every complete lattice is representable as a
hierarchy of concepts, which were conceived as sets of objects sharing a maximal set of attributes.
This paved the way for using the developed field of lattice theory for a transparent and complete
representation of very different types of knowledge. FCA was inspired by the pedagogue Hartmut von
Hentig \cite{Hen74} and his program of restructuring sciences, with a view to interdisciplinary
collaboration and democratic control. The philosophical background goes back to Charles S. Peirce
(1839 - 1914), who condensed some of his main ideas to the pragmatic maxim:
\begin{center}
\parbox{10cm}{\emph{Consider what effects, that might conceivably have practical bearings, we
conceive the objects of our conception to have. Then, our conception of these effects is the whole
of our conception of the object.} \cite[5.402]{Pei35}}
\end{center}

In that tradition, FCA aims at unfolding the observable, elementary properties defining the objects
subsumed by scientific concepts. If applied to temporal transitions, effects of homogeneous classes
of states can be modeled and predicted in a clear and concise manner. Thus FCA seems to be
appropriate to describe causality - and the limits of its understanding.

At present, FCA is a richly developed mathematical theory, and there are practical applications in
various fields as data and text mining, knowledge management, semantic web, software engineering or
economics \cite{GSW05}. FCA has been used for the analysis of gene expression data in \cite{Cho06}
and \cite{Mot08}, but this is the first approach of applying it to model (gene) regulatory
networks. The mathematical framework of FCA is very general and open, such that multifarious
refinements are possible, according to current approaches of modeling dynamics within systems
biology. On the other hand, we developed a formal structure for general discrete temporal
transitions. They occur in a variety of domains: control of engineering processes, development of
the values of variables or objects in a computer program, change of interactions in social
networks, a piece of music, etc.

In this paper, however, the examples are uniquely biological. The purpose is to construct a
knowledge base for reasoning about temporal dependencies within gene regulatory or signal
transduction networks, by the \textit{attribute exploration} algorithm: For a given set of
interesting properties, it builds a sound, complete and nonredundant knowledge base. This minimal
set of rules has to be checked by an expert or a computer program, e.g. by comparison of knowledge
based predictions with data.

Since there exist relatively fixed thresholds of activation for many genes, it is a common
abstraction to consider only two expression levels \textit{off} and \textit{on}. The classical
approach of Boolean networks \cite{Kau93} is able to capture essential dynamic aspects of
regulatory networks. Our present work is based on it, which also makes it possible to use standard
mathematical and logical derivations for deciding many rules automatically and for a scaling up to
larger networks. Nevertheless, the introduction of more fine-grained expression levels remains
possible, e.g. in the sense of qualitative reasoning \cite{KGC05}. Further, this work is influenced
by computation tree logic \cite{Cha04}, automata theory and a FCA modeling of temporal transitions
in \cite{Rud01}. Temporal concept analysis as developed by K.E. Wolff \cite[p. 127-148]{GSW05} is
more directed toward a structured visualization of experimental time series then toward temporal
logic. We applied it to the analysis of gene expression data in \cite{Wol08a}.

In Section \ref{methods}, the general mathematical framework will be developed. Section
\ref{sec:results} gives results for a \textit{B. subtilis} Boolean network. In Section
\ref{conclusion}, we will discuss the potential of the method and make some proposals for improving
it by solving mathematical problems which have emerged from the applications.

\section{Methods}\label{methods}
\subsection{Fundamental Structures of Formal Concept Analysis}
One of the classical aims of FCA is the structured, compact but complete visualization of a data
set by a conceptual hierarchy. We briefly introduce its basic definitions; for an easy example see
\texttt{http://www.upriss.org.uk/fca/fca.html}.

\begin{definition}\label{def:context}
A \textbf{formal context} $(G,M,I)$ defines a relation $I \subseteq G \times M$ between objects
from a set $G$ and attributes from a set $M$. The set of the attributes common to all objects in $A
\subseteq G$ is denoted by the $'$-operator:
\[A' := \{m \in M | \: gIm \text{ for all } g \in A\}.\]
The set of the objects sharing all attributes in $B \subseteq M$ is
\[B' := \{g \in G | \: gIm \text{ for all } m \in B\}.\]
\end{definition}

\begin{definition}\label{def:concept}
A \textbf{formal concept} of the context $(G,M,I)$ is a pair $(A,B)$ with $A \subseteq G, \:
B\subseteq M,\: A'=B$ and $B'=A$. $A$ is the \textbf{extent}, $B$ the \textbf{intent} of the
concept $(A,B)$.
\end{definition}

Thus a formal context $(G,M,I)$ is a special, but universally applicable type of a data table,
provided with two operators $\mathcal{C}_M: \mathcal{P}(M) \rightarrow \mathcal{P}(M), \:B
\subseteq M \mapsto B''$ and $\mathcal{C}_G: \mathcal{P}(G) \rightarrow \mathcal{P}(G), \:A \subset
G \mapsto A''$. It is easy to see that they are closure operators, with the properties monotony,
extension and idempotency \cite[Definition 14]{GW99}. It follows that the set of all extents resp.
intents of a formal context is a closure system, i.e. it is closed under intersections
\cite[Theorem 1]{GW99}.

Formal concepts can be ordered by set inclusion of the extents or - dually, with the inverse order
relation - of the intents. With this order, the set of all concepts of a given formal context is a
complete lattice \cite[Theorem 3]{GW99} (Figure \ref{fig:noStress}).

During the interactive \textit{attribute exploration} algorithm \cite[p. 85ff.]{GW99}, an expert is
asked about the general validity of basic implications $A \rightarrow B$ between the attributes of
a given formal context $(G,M,I)$. An implication has the meaning: "If an object $g \in G$ has all
attributes $a \in A \subseteq M$, then it has also all attributes $b \in B \subseteq M$." If the
expert denies, s/he must provide a counterexample, i.e. a new object of the context. If s/he
accepts, the implication is added to the \textit{stem base} of the - possibly enlarged - context. A
theorem by Duquenne-Guiges \cite[Theorem 8]{GW99} ensures that every implication semantically valid
in the underlying formal context can be derived syntactically from this minimal set by the
Armstrong rules \cite[Proposition 21]{GW99}. In many applications, one is merely interested in the
implicational logic of a given formal context, and there is no need for an expert to confirm the
implications.

\subsection{Constructing the Knowledge Base - Summary of the Method}
We start with two sets:
\begin{itemize}
\item The universe $E$. The elements of $E$ represent the \textbf{entities} of the world which we
are interested in.
\item The set $F$ (\textbf{fluents}) denotes changing properties of the entities.
\end{itemize}

A \textbf{state} $\varphi \in G$ is an assignment of values in $F$ to the variables $e\in E$, hence
it is defined by a specific choice of attributes $m \in M \subseteq E \times F$.\footnote{Thus - as
usual - states with the same variable values are identified. It would also be possible to
distinguish them as situations by introducing a new attribute, e.g. "time interval".} By means of a
state context (Definition \ref{def:stateCxt}, Table \ref{tab:extCxt} left part), temporal data can
be translated into the language of FCA. The dynamics is modeled by a binary relation $R \subseteq G
\times G$ on the set of states, which gives rise to a transition context $\mathbb{K}$ (Definition
\ref{def:transCxt}, Table \ref{tab:trCxt}): the objects are transitions (elements of the relation)
and the attributes the values of the entities defining the input and output state of a transition.
This data table may reflect observations repeated at different time points, or the transitions may
be generated by a dynamic model. As to the latter, we are focusing here on Boolean networks, i.e
sets of Boolean functions for each entity (Definition \ref{BoolNet}).

It is promising to consider the transitive closure of $R$. Objects of the transitive context
$\mathbb{K}_t$ (Definition \ref{def:transitCxt}, Table \ref{tab:transitCxt}) then are pairs of
states such that the output state emerges from the input state by some transition sequence of
arbitrary length. Finally we extend the state context $\mathbb{K}_s$ by the temporal attributes
always($m$), eventually($m$) and never($m$), which are determined by the given transitions
(Definition \ref{def:stateCxt}, Table \ref{tab:extCxt}).

The defined mathematical structures may be used in various ways. For instance, one could evaluate -
i.e. generalize implications or reject them supposing outliers or by reason of special conditions -
experimental time series by comparison with existing knowledge. Our general procedure is the
following:
\begin{enumerate}
\item\label{item:Kobs} Discretize a set of time series of gene expression measurements and
transform it to an observed transition context $\mathbb{K}^{obs}$.
\item\label{item:BoolNet} For a set of interesting genes, translate interactions from biological
literature and databases into a Boolean network.
\item\label{item:K} Construct the transition context $\mathbb{K}$ by a simulation
 starting from a set of  states,
 e.g. the initial states of $\mathbb{K}^{obs}$ or all states (for small networks).
\item\label{item:K_t} Derive the respective transitive contexts $\mathbb{K}_t$ and
$\mathbb{K}^{obs}_t$.
\item \label{item:compObs}Perform attribute exploration of $\mathbb{K}_t$. Decide about an
implication $A\rightarrow B$, $A,B \subseteq M$, by checking its validity in $\mathbb{K}^{obs}_t$
and/or by searching for supplementary knowledge. Possibly provide a counterexample from
$\mathbb{K}^{obs}_t$.
\item\label{item:query} Answer queries from the modified context $\mathbb{K}_t$ and from its stem base.
\end{enumerate}
In step \ref{item:compObs}, automatic decision criteria could be thresholds of support $q= |(A
\cup B)'|$ and confidence $p= \frac{|(A \cup B)'|}{|A'|}$ for an implication in
$\mathbb{K}^{obs}_t$. A weak criterion is to reject only implications with support 0 (but if no
object in $\mathbb{K}^{obs}_t$ has all attributes from A, the implication is not violated). In
\cite{Wol07}, a strong criterion has been applied: implications of $\mathbb{K}_t$ had to be valid
also in the observed context ($p=1$). This is equivalent to an exploration of the union of the two
contexts.

In Section \ref{sec:results} we will analyse pure knowledge based simulations; the validation by
data and experimental literature had been done before in \cite{deJ04}. For that reason, in step
\ref{item:compObs}. the stem base is computed automatically, without further confirmation by an
expert.

Step \ref{item:BoolNet}. could be supported by text mining software. Then attribute exploration
provides strong criteria of validation. We implemented the steps \ref{item:Kobs}., \ref{item:K}.
and \ref{item:K_t}. in R [www.r-project.org]. For step \ref{item:compObs}., we used the Java tool
Concept Explorer [http://sourceforge.net/projects/conexp]. The output was translated with R into a
PROLOG knowledge base. The R scripts are available on request.

\subsection{Definition of the Relevant Formal Contexts}\label{subsec:trans}
With given sets $E$ and $F$, we define a state as a map $\varphi: E \rightarrow F$. To explore
static features of states, the following formal context\footnote{It is equivalent to a \textit{many-valued
context} with \textit{nominal scale} \cite[Section 1.3]{GW99}, \cite[p. 123]{Wol07}. For better
readability, we draw the contexts in the latter form (Table \ref{tab:extCxt}). Deriving a
one-valued context according to Definition \ref{def:stateCxt} is obvious: each many-valued
attribute $e$ is replaced by $\{(e,f)|\: f \in F\})$, e.g. SigA by SigA.off and SigA.on. If an
attribute $e$ takes exactly one of these values, negation of \textit{on} and \textit{off} is
expressed. Other kinds of scaling like the \textit{interordinal scale} could be interesting, if
there are more than two levels ($|F| > 2$).} is defined.
\begin{definition}\label{def:stateCxt}
Given two sets $E$ (entities) and $F$ (fluents), a \textbf{state context} is a formal context
$(G,M,I)$ with $G\subseteq F^E := \{\varphi: E \rightarrow F\}$ and $M \subseteq E \times F$; its
relation $I$ is given as $\varphi\, I\,(e,f) \Leftrightarrow \varphi(e) = f$, for all $\varphi \in
G, e \in E$ and $f \in F$.
\end{definition}

\begin{definition}\label{def:transCxt}
Given a state context $(G,M,I)$ and a relation $R \subseteq G\times G$, a \textbf{transition
context} $\mathbb{K}$ is the context $(R, M\times\{in,out\}, \tilde{I})$ with the property
\begin{equation}\label{trContext}
\forall i \in \{in,out\}\!:\: (\varphi^{in},\varphi^{out}) \tilde{I} (e,f,i) \Leftrightarrow \varphi^i(e)=f.
\end{equation}
\end{definition}
\begin{table}\caption{A transition context for the states of Table \ref{tab:extCxt}, with all attributes
that are changing during the small simulation, as well as Spo0A and Spo0AP.}\centering
\begin{tabular}{|c|ccccccc|ccccccc|}
\hline\textbf{Transition} &\begin{sideways}\textbf{KinA$^{in}$}\end{sideways}
&\begin{sideways}\textbf{Spo0A$^{in}$}\end{sideways}
&\begin{sideways}\textbf{Spo0AP$^{in}$}\end{sideways}
&\begin{sideways}\textbf{AbrB$^{in}$}\end{sideways}
&\begin{sideways}\textbf{Spo0E$^{in}$}\end{sideways}
&\begin{sideways}\textbf{SigH$^{in}$}\end{sideways}
&\begin{sideways}\textbf{Hpr$^{in}$}\end{sideways}
&\begin{sideways}\textbf{KinA$^{out}$}\end{sideways}
&\begin{sideways}\textbf{Spo0A$^{out}$}\end{sideways}
&\begin{sideways}\textbf{Spo0AP$^{out}$}\end{sideways}
&\begin{sideways}\textbf{AbrB$^{out}$}\end{sideways}
&\begin{sideways}\textbf{Spo0E$^{out}$}\end{sideways}
&\begin{sideways}\textbf{SigH$^{out}$}\end{sideways}
&\begin{sideways}\textbf{Hpr$^{out}$}\end{sideways}\\
\hline
$(\varphi_0^{in}, \varphi_1^{out})$ &-&+&-&-&-&-&+&-&+&-&+&+&+&-\\
$(\varphi_1^{in}, \varphi_2^{out})$ &-&+&-&+&+&+&-&+&+&-&-&-&-&+\\
$(\varphi_2^{in}, \varphi_1^{out})$ &+&+&-&-&-&-&+&-&+&-&+&+&+&-\\
\hline
\end{tabular}
\label{tab:trCxt}
\end{table}
Transitions may be generated by a Boolean network:
\begin{definition}\label{BoolNet}
Let $E$ be an arbritray set of entities, $F:=\{0,1\}$ (fluents), and states $G \subseteq F^E$. Then
a transition function $F^E \rightarrow F^E$ is called a \textbf{Boolean network}.
\end{definition}
We will identify the elements of $F$ with $-, +$ or \textit{off, on} respectively. This definition
is subsumed by the definition of a dynamic network in \cite[Definition p. 34]{Lau05}, with a set of
variables $E$ and state sets $X_1 = ... = X_n = F$. We use a parallel update schedule, i.e. the
order relation on $E$ is empty. Boolean networks may be generalized in order to include
nondeterminism; then different output states $\varphi^{out}$ are generated from a single input
state $\varphi^{in}$ (see Section \ref{sec:results}, compare \cite{Wol07}).
\begin{definition}\label{def:transitCxt}
A transition context $\mathbb{K}$ with a transitively closed relation $t(R) \subseteq G \times G$
is called a \textbf{transitive context} $\mathbb{K}_t$.
\end{definition}
\begin{table}\caption{The transitive context derived from the transition context of Table
\ref{tab:trCxt}.}\centering
\begin{tabular}{|c|ccccccc|ccccccc|}
\hline\textbf{Transition} &\begin{sideways}\textbf{KinA$^{in}$}\end{sideways}
&\begin{sideways}\textbf{Spo0A$^{in}$}\end{sideways}
&\begin{sideways}\textbf{Spo0AP$^{in}$}\end{sideways}
&\begin{sideways}\textbf{AbrB$^{in}$}\end{sideways}
&\begin{sideways}\textbf{Spo0E$^{in}$}\end{sideways}
&\begin{sideways}\textbf{SigH$^{in}$}\end{sideways}
&\begin{sideways}\textbf{Hpr$^{in}$}\end{sideways}
&\begin{sideways}\textbf{KinA$^{out}$}\end{sideways}
&\begin{sideways}\textbf{Spo0A$^{out}$}\end{sideways}
&\begin{sideways}\textbf{Spo0AP$^{out}$}\end{sideways}
&\begin{sideways}\textbf{AbrB$^{out}$}\end{sideways}
&\begin{sideways}\textbf{Spo0E$^{out}$}\end{sideways}
&\begin{sideways}\textbf{SigH$^{out}$}\end{sideways}
&\begin{sideways}\textbf{Hpr$^{out}$}\end{sideways}\\
\hline
$(\varphi_0^{in}, \varphi_1^{out})$ &-&+&-&-&-&-&+&-&+&-&+&+&+&-\\
$(\varphi_0^{in}, \varphi_2^{out})$ &-&+&-&-&-&-&+&+&+&-&-&-&-&+\\
$(\varphi_1^{in}, \varphi_1^{out})$ &-&+&-&+&+&+&-&-&+&-&+&+&+&-\\
$(\varphi_1^{in}, \varphi_2^{out})$ &-&+&-&+&+&+&-&+&+&-&-&-&-&+\\
$(\varphi_2^{in}, \varphi_1^{out})$ &+&+&-&-&-&-&+&-&+&-&+&+&+&-\\
$(\varphi_2^{in}, \varphi_2^{out})$ &+&+&-&-&-&-&+&+&+&-&-&-&-&+\\
\hline
\end{tabular}
\label{tab:transitCxt}
\end{table}
\begin{definition} A \textbf{state context} $\mathbb{K}=(G,M,I)$ is \textbf{extended} to a
formal context $\mathbb{K}_s=(G, M \cup T, I \cup I_T)$ by a set of temporal attributes $T :=
\{always(m)|\: m \in M\} \cup \{never(m)|\: m \in M\} \cup \{eventually(m)|\: m \in M\}$. Let
$\varphi^{in} \in G, m=(e,f),\: e \in E,\: f \in F$, and $t(R) \subseteq G \times G$ a transitively
closed relation. The relation $I_T$ of $\mathbb{K}_s$ then is defined as follows:
\begin{align*}
\varphi^{in}\: I_T\:always(m) \Leftrightarrow\: &\forall (\varphi^{in}, \varphi^{out}) \in t(R): \varphi^{out}(e) = f\\
\varphi^{in}\: I_T\:never(m) \Leftrightarrow\: &\forall (\varphi^{in}, \varphi^{out}) \in t(R): \varphi^{out}(e) \neq f\\
\varphi^{in}\: I_T\:eventually(m) \Leftrightarrow\: &\exists (\varphi^{in}, \varphi^{out}) \in t(R): \varphi^{out}(e) = f
\end{align*}
For $B \subseteq T$, set $always(B)$ := $\{\{always(b_1),..., always(b_i)\} \mid b_1,...,b_i \in
B\}$, and analogously $never(B)$ and $eventually(B)$.
\end{definition}
The attributes will be abbreviated to alw($m$), nev($m$) and ev($m$). In a nondeterministic
setting, alw($m$) and nev($m$) refer to all possible transition paths starting from $\varphi^{in}$,
ev($m$) to the existence of a path.
\begin{table}\caption{\textit{Left part}: A state context corresponding to a simulation starting from a \textit{B.
subtilis} state without nutritional stress (see Section \ref{simNoStress}, \cite[Table 4]{Ste07}).
+: on, -: off. \textit{Right part}: extension by temporal attributes. Here they are the same for
all states, since these reach the \textit{attractor} (limit state cycle) $\{\varphi_1, \varphi_2\}$
after at most one time step.}\centering
\begin{tabular}{|c|ccccccc|ccccccc|}
\hline\textbf{State} &\begin{sideways}\textbf{KinA$^{in}$}\end{sideways}
&\begin{sideways}\textbf{Spo0A$^{in}$}\end{sideways}
&\begin{sideways}\textbf{Spo0AP$^{in}$}\end{sideways}
&\begin{sideways}\textbf{AbrB$^{in}$}\end{sideways}
&\begin{sideways}\textbf{Spo0E$^{in}$}\end{sideways}
&\begin{sideways}\textbf{SigH$^{in}$}\end{sideways}
&\begin{sideways}\textbf{Hpr$^{in}$}\end{sideways} &\begin{sideways}\textbf{ev(KinA)}\end{sideways}
&\begin{sideways}\textbf{alw(KinA)}\end{sideways}
&\begin{sideways}\textbf{nev(Spo0AP)}\end{sideways}
&\begin{sideways}\textbf{ev(AbrB)}\end{sideways} &\begin{sideways}\textbf{alw(AbrB)}\end{sideways}
&\begin{sideways}\textbf{ev(Hpr)}\end{sideways}
&...\\
\hline
$\varphi_0^{in}$ &-&+&-&-&-&-&+&x&&x&x&&x&\\
$\varphi_1^{in}$ &-&+&-&+&+&+&-&x&&x&x&&x&\\
$\varphi_2^{in}$ &+&+&-&-&-&-&+&x&&x&x&&x&\\
\hline
\end{tabular}
\label{tab:extCxt}
\end{table}

\subsection{Dependency of Contexts and Background Knowledge}\label{subsec:bg}
In the following we will present first mathematical results that can improve computability; they
are not necessary for the understanding of the application in Section \ref{sec:results}. By
entering background knowledge (not necessarily implications) prior to an attribute exploration, the
algorithm may be shortened considerably \cite[p. 101-113]{GSW05}. We searched for first order logic
background formula in order to use the results of an attribute exploration for the exploration of
the next context in the hierarchy. Then the implications of the latter context are derivable from
this background knowledge and a reduced set of new implications. Also during the exploration of one
context, implications can be decided automatically based on already accepted implications. In this
way the expert is enabled to concentrate on really interesting hypotheses. Thus, the implications
of a state context hold in the input and output part of the corresponding transition context (for
an example see \cite[p. 149f.]{Rud01}). Related to transitive and extended state contexts, the
subsequent result holds:
\begin{proposition}\label{prop:KtKs}
Let $\mathbb{K}_s=(G, M \cup T, I \cup I_T)$ an extended state context. Suppose the relation $t(R)
\subseteq G \times G$ is the object set of the transitive context $\mathbb{K}_t = (t(R),
M\times\{in,out\}, \tilde{I})$. Then the following entailments between implications of both
contexts are valid:
\begin{align}
B^{in} \rightarrow m^{out}\text{ in }\mathbb{K}_t &\equiv B \rightarrow \text{always}(m)
\text{ in }\mathbb{K}_s\label{prop:1}\\
\label{prop:2}B^{in} \rightarrow m^{out}\text{ in }\mathbb{K}_t &\models B \rightarrow
\text{eventually}(m) \text{ in }\mathbb{K}_s\\
\label{prop:3}B^{out} \rightarrow m^{out} \text{ in }\mathbb{K}_t &\models \text{always(B)}
\rightarrow \text{always(m)} \text{ in }\mathbb{K}_s\\
\label{prop:4}B^{out} \rightarrow m^{out} \text{ in }\mathbb{K}_t &\models \text{eventually(B)}
\rightarrow \text{eventually}(m) \text{ in}\: \mathbb{K}_s\\
\label{prop:5}B^{in}\cup m^{out} \rightarrow \bot\text{ in }\mathbb{K}_t &\models B \rightarrow
\text{never}(m) \text{ in }\mathbb{K}_s
\end{align}
If the latter implication does not follow from the stem base of $\mathbb{K}_t$, this is equivalent
to $B \rightarrow \text{eventually}(m)$ in $\mathbb{K}_s$.
\end{proposition}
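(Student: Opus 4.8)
The strategy is to unfold all five entailments directly from the definitions of semantic validity of an implication in a formal context together with the definitions of $\tilde I$ (equation~\eqref{trContext}) and of $I_T$. Recall that an implication $P \rightarrow q$ holds in a context iff every object possessing all attributes in $P$ also possesses $q$; here the objects of $\mathbb{K}_t$ are the pairs $(\varphi^{in},\varphi^{out}) \in t(R)$, while the objects of $\mathbb{K}_s$ are the states $\varphi \in G$. So in each case I would fix an arbitrary witness object, translate ``possesses attribute $m^{in}$ / $m^{out}$'' into the statement $\varphi^{in}(e)=f$ resp.\ $\varphi^{out}(e)=f$, and translate ``possesses attribute $\mathrm{always}(m)$'' etc.\ into the corresponding quantified statement over $t(R)$, then check the implication chases through.

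\textbf{The equivalence \eqref{prop:1}.} I would argue both directions. For ``$\Rightarrow$'': assume $B^{in}\rightarrow m^{out}$ in $\mathbb{K}_t$; take any state $\varphi^{in}$ with $\varphi^{in} \models B$ (i.e.\ $\varphi^{in}(e_j)=f_j$ for all $(e_j,f_j)\in B$) and any $(\varphi^{in},\varphi^{out})\in t(R)$. That pair has all attributes $B^{in}$ by \eqref{trContext}, hence has $m^{out}$, i.e.\ $\varphi^{out}(e)=f$; since $\varphi^{out}$ was an arbitrary $t(R)$-successor, $\varphi^{in} \models \mathrm{always}(m)$. For ``$\Leftarrow$'': reverse the same computation — any pair in $t(R)$ with attributes $B^{in}$ has an input state satisfying $B$, hence satisfying $\mathrm{always}(m)$, hence the output has $\varphi^{out}(e)=f$, i.e.\ the pair has $m^{out}$. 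One subtlety worth a remark: if some $\varphi^{in}$ satisfying $B$ has \emph{no} $t(R)$-successor at all, both $B^{in}\rightarrow m^{out}$ and $\mathrm{always}(m)$ are vacuously satisfied there, so the equivalence still holds; I would note this explicitly since it recurs below.

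\textbf{The one-directional entailments \eqref{prop:2}--\eqref{prop:5}.} These are all ``$\models$'', i.e.\ logical consequence, and each follows by the same template but now only one implication survives. For \eqref{prop:2}: $B^{in}\rightarrow m^{out}$ gives, for \emph{every} successor, $\varphi^{out}(e)=f$; in particular (provided at least one successor exists) there \emph{exists} one, so $\mathrm{eventually}(m)$ holds — note this is genuinely weaker, which is why it is not an equivalence. For \eqref{prop:3} and \eqref{prop:4}: here the hypothesis is about the \emph{output} parts, $B^{out}\rightarrow m^{out}$, so given $\varphi^{in} \models \mathrm{always}(B)$, every successor $\varphi^{out}$ satisfies all of $B$, hence (applying the $\mathbb{K}_t$-implication to the pair $(\varphi^{in},\varphi^{out})$, using that the output attributes of this pair are exactly those of the state $\varphi^{out}$) satisfies $m$; this gives $\mathrm{always}(m)$, and the $\exists$-analogue gives \eqref{prop:4}. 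Here one uses transitivity of $t(R)$ only implicitly, through the fact that $(\varphi^{in},\varphi^{out})$ is itself an object of $\mathbb{K}_t$ whenever $\varphi^{out}$ is a $t(R)$-successor of $\varphi^{in}$. For \eqref{prop:5}: $B^{in}\cup m^{out}\rightarrow\bot$ means no object of $\mathbb{K}_t$ has all attributes of $B^{in}$ together with $m^{out}$; so for any $\varphi^{in}\models B$ and any successor $\varphi^{out}$ we cannot have $\varphi^{out}(e)=f$, i.e.\ $\varphi^{in}\models\mathrm{never}(m)$.

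\textbf{The final clause and the expected obstacle.} The closing sentence asserts that if $B\rightarrow\mathrm{never}(m)$ does \emph{not} follow from the stem base of $\mathbb{K}_t$, then it is equivalent to $B\rightarrow\mathrm{eventually}(m)$ in $\mathbb{K}_s$. I would read this as: when $B^{in}\cup m^{out}\rightarrow\bot$ fails in $\mathbb{K}_t$ (equivalently, is not derivable from its stem base, by Duquenne--Guigues completeness), there is some object $(\varphi^{in},\varphi^{out})\in t(R)$ with $\varphi^{in}\models B$ and $\varphi^{out}(e)=f$; conversely a witness to $B\rightarrow\mathrm{eventually}(m)$ failing provides exactly such a pair — so ``$B$ does not force $\mathrm{never}(m)$'' and ``$B$ forces $\mathrm{eventually}(m)$'' are complementary on the set of $B$-states once we restrict attention to $B$-states that possess a successor carrying value $f$ for $e$. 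The main thing to get right, and the place I expect the argument to be delicate, is the bookkeeping around states with no successors and the precise meaning of ``does not follow from the stem base'' versus ``is not valid in $\mathbb{K}_t$'': these coincide by the Duquenne--Guigues theorem cited earlier, but the phrasing conflates a syntactic and a semantic statement, and I would want to state the vacuous-case convention (an implication with empty premise-extent is valid) up front so that the ``equivalent to $\mathrm{eventually}$'' claim is literally correct rather than merely morally so. Modulo that care, every step is a one-line unfolding of the two relation definitions and the definition of validity.
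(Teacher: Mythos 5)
The paper itself offers no written proof beyond ``straightforward from the definitions,'' so your unfolding of $\tilde I$, $I_T$ and implication validity is exactly the intended route, and for \eqref{prop:1}, \eqref{prop:2}, \eqref{prop:3} and \eqref{prop:5} your argument is correct; indeed you are more careful than the paper in flagging that a $B$-state with no $t(R)$-successor satisfies $\mathrm{always}(m)$ and $\mathrm{never}(m)$ vacuously but falsifies $\mathrm{eventually}(m)$, so that \eqref{prop:2} needs a seriality assumption (harmless here, since transitions come from a total Boolean transition function).

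There is, however, one genuine gap: the sentence ``the $\exists$-analogue gives \eqref{prop:4}'' does not go through for $|B|>1$. For \eqref{prop:3} you correctly use that $\varphi^{in}\models\mathrm{always}(b)$ for all $b\in B$ forces \emph{every} successor to satisfy all of $B$ simultaneously, so the pair is an object to which $B^{out}\rightarrow m^{out}$ applies. But $\varphi^{in}\models\mathrm{eventually}(b)$ for all $b\in B$ only provides, for each $b$, \emph{some} successor satisfying that single $b$; these witnesses may all be different, and then no object of $\mathbb{K}_t$ carries the full attribute set $B^{out}$, the hypothesis $B^{out}\rightarrow m^{out}$ holds vacuously, and $\mathrm{eventually}(m)$ can still fail (two successors, each realizing one half of $B$ and neither realizing $m$, is a counterexample). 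So \eqref{prop:4} as you argue it is only valid for singleton $B$, or under a reading of $\mathrm{eventually}(B)$ as a single joint attribute ``some successor satisfies all of $B$'' rather than the attribute set $\{\mathrm{eventually}(b)\mid b\in B\}$; the paper's set-of-sets definition of $\mathrm{eventually}(B)$ leaves this ambiguous, and a complete proof must commit to one reading and say so. Your reading of the final clause (that non-derivability from the stem base means non-validity by Duquenne--Guigues, yielding a single witnessing pair, which is weaker than the universally quantified implication $B\rightarrow\mathrm{eventually}(m)$) correctly identifies the remaining delicacy there.
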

\begin{proof}
The proofs are straightforward from the definitions. \qed
\end{proof}

In order to get a complete overview on valid entailments, as a first step we performed rule
exploration \cite{Zic91} of the following \textbf{test context}, i.e. exploration of Horn rules
instead of implications, thus variables are admitted: The objects are all possible $\mathbb{K}_t$
respectively the corresponding $\mathbb{K}_s$, and the attributes are the following classes of
implications with "homogeneous" premises. Then the explored rules for implications correspond to
entailments valid for the semantics given by the objects, the transitive contexts. The sets are
nonempty subsets of $M$, $m=(e,f),\: f\in F :=\{0,1\}$, and $m \in B_0, C_0\:(B_1,C_1) \Rightarrow
\varphi(e) =0\: (1)$.\label{imp} We suppose that all states and transitions are completely defined.
\begin{center}
\begin{minipage}[t]{60mm}
\begin{enumerate}
	\item $B^{in}\rightarrow C^{in}$
  \item $B^{in}\rightarrow C_0^{out} \equiv B^{in}\rightarrow$ nev($C_1$)
  \item $B^{in}\rightarrow C_1^{out} \equiv B^{in}\rightarrow$ alw($C_1$)
  \item $B^{in}\rightarrow$ ev($C_1$)
  \item $B_0^{out}\rightarrow C^{in}$
  \item $B_1^{out}\rightarrow C^{in}$
  \item ev($B_1$) $\rightarrow C^{in}$
  \item alw($B_1$) $\rightarrow C^{in}$
  \item nev($B_1$) $\rightarrow C^{in}$
  \item $B_0^{out}\rightarrow C_0^{out}$
  \item $B_0^{out}\rightarrow C_1^{out}$
\end{enumerate}
\end{minipage}
\begin{minipage}[t]{45mm}
 \begin{enumerate}\setcounter{enumi}{11}
 \item $B_1^{out}\rightarrow C_0^{out}$
  \item $B_1^{out}\rightarrow C_1^{out}$
  \item ev($B_1$) $\rightarrow$ ev($C_1$)
  \item ev($B_1$) $\rightarrow$ alw($C_1$)
  \item ev($B_1$) $\rightarrow$ nev($C_1$)
  \item alw($B_1$) $\rightarrow$ ev($C_1$)
  \item alw($B_1$) $\rightarrow$ alw($C_1$)
  \item alw($B_1$) $\rightarrow$ nev($C_1$)
  \item nev($B_1$) $\rightarrow$ ev($C_1$)
  \item nev($B_1$) $\rightarrow$ alw($C_1$)
  \item nev($B_1$) $\rightarrow$ nev($C_1$)
\end{enumerate}
\end{minipage}
\end{center}

The equivalences in 2. and 3. follow from Proposition \ref{prop:KtKs}(\ref{prop:1}). Since the
implications comprising input attributes are independent from those related only to output
attributes, rule exploration was performed (almost) independently for the first 9 and the remaining
13 implications. Results for the second part are shown here.

The exploration started from a hypothetical context as single object of the test context, where no
implications were valid. Before, we had added 25 known entailments as background rules (BR), like
those of Proposition \ref{prop:KtKs} or following from the definitions, like alw($B_1$)
$\rightarrow$ ev($B_1$). A counterexample represents a significant class of contexts. They had to
be chosen carefully, since an object not having its maximal attribute set might preclude a valid
entailment.\footnote{Thus the attribute set of a counterexample must be a concept intent in the
final test context.} The exploration resulted in the following stem base of only 14 entailments.
Most of them are background rules (they are accepted automatically during the exploration), but not
all of these are needed in order to derive all valid entailments between the chosen implications.
This demonstrates the effectivity and minimality of the algorithm. Entailments 5., 6., 7. and 10.
were newly found.
\begin{enumerate}
  \item nev($B_1$) $\rightarrow$ alw($C_1$)
  $\models$ nev($B_1$) $\rightarrow$ ev($C_1$) (BR 1)
  \item nev($B_1$) $\rightarrow$ ev($C_1$), nev($B_1$) $\rightarrow$ nev($C_1$)
  $\models \bot$ (BR 11)
  \item alw($B_1$) $\rightarrow$ alw($C_1$)
  $\models$ alw($B_1$) $\rightarrow$ ev($C_1$) (BR 2)
  \item alw($B_1$) $\rightarrow$ ev($C_1$), alw($B_1$) $\rightarrow$ nev($C_1$)
  $\models \bot$ (BR 14)
  \item ev($B_1$) $\rightarrow$ nev($C_1$), nev($B_1$) $\rightarrow$ nev($C_1$)
  $\models$  $B^{in} \rightarrow C_0^{out}$, $B_0^{out}$ $\rightarrow$ $C_0^{out}$, $B_1^{out}$ $\rightarrow$ $C_0^{out}$
  \item ev($B_1$) $\rightarrow$ nev($C_1$), alw($B_1$) $\rightarrow$ nev($C_1$)
  $\models$ $B_1^{out}$ $\rightarrow$ $C_0^{out}$
  \item ev($B_1$) $\rightarrow$ nev($C_1$), alw($B_1$) $\rightarrow$ ev($C_1$)
  $\models \bot$
  \item ev($B_1$) $\rightarrow$ alw($C_1$) $\models$
  ev($B_1$) $\rightarrow$ ev($C_1$), alw($B_1$) $\rightarrow$ ev($C_1$),
  alw($B_1$) $\rightarrow$ alw($C_1$) (BR 3)
  \item ev($B_1$) $\rightarrow$ ev($C_1$)
  $\models$ alw($B_1$) $\rightarrow$ ev($C_1$) (BR 4)
  \item ev($B_1$) $\rightarrow$ ev($C_1$), nev($B_1$) $\rightarrow$ ev($C_1$)
  $\models$ $B^{in}$ $\rightarrow$ $C_1^{out}$, $B_0^{out}$ $\rightarrow$ $C_1^{out}$,
  $B_1^{out}$ $\rightarrow$ $C_1^{out}$, ev($B_1$) $\rightarrow$ alw($C_1$)
  \item $B_1^{out}$ $\rightarrow$ $C_1^{out}$  $\models$ ev($B_1$) $\rightarrow$ ev($C_1$),
  alw($B_1$) $\rightarrow$ ev($C_1$), alw($B_1$) $\rightarrow$ alw($C_1$) (BR 4, BR
  5 $\Leftarrow$ Proposition \ref{prop:KtKs}(\ref{prop:3})(\ref{prop:4}))
  \item $B_1^{out}$ $\rightarrow$ $C_0^{out}$
  $\models$ alw($B_1$) $\rightarrow$ nev($C_1$) (BR 9 $\Leftarrow$ Proposition \ref{prop:KtKs}(\ref{prop:3}))
  \item $B_0^{out}$ $\rightarrow$ $C_1^{out}$ $\models$
  nev($B_1$) $\rightarrow$ ev($C_1$), nev($B_1$) $\rightarrow$ alw($C_1$)
  (BR 1, 10 $\Leftarrow$ Proposition \ref{prop:KtKs}(\ref{prop:3}))
  \item $B_0^{out}$ $\rightarrow$ $C_0^{out}$
  $\models$ nev($B_1$) $\rightarrow$ nev($C_1$) (BR 6 $\Leftarrow$ Proposition \ref{prop:KtKs}(\ref{prop:3}))
\end{enumerate}

It remains to prove the rules of this stem base, which is easy; we are giving some hints: BR 1, 2,
3, and 4 are based on alw($A$) $\rightarrow$ ev($A$), $A \subseteq M$, and BR 11 and 14 on
nev($C_1$), ev($C_1$) $\rightarrow \bot$ ($\bot$ = set of all attributes, and the corresponding
object set is empty).

7.: Since nev($C_1$) and ev($C_1$) do not occur together by definition, the combination of the two
implications has support 0 in the test context. In the underlying contexts, the premise alw($B_1$)
(a subcase of ev($B_1$)) is no attribute of any state. The implication alw$(B_1)\rightarrow \bot$
holds, which has not been considered explicitly.

10.: Inversely, in all possible cases the states / transitions have the attribute ev($C_1$), and
therefore also alw($C_1$) and $C_1^{out}$. Explicitly: $\top \rightarrow$ ev($C_1$), $\top
\rightarrow$ alw($C_1$), $\top \rightarrow$ $C_1^{out}$. 5. is a parallel rule concerning
nev($C_1$).

Rules 7. and 10. suggest that implications with empty premise $\top$ or conclusion $\bot$ should be
considered explicitly. If the counterexamples have maximal attribute sets, as a conclusion it can
be stated that we have derived a set of rules representing a minimal, sound and complete entailment
calculus for the selected classes of implications for transition and state contexts.

\section{Results: Sporulation in \textit{Bacillus subtilis}}\label{sec:results}
In order to demonstrate the characteristics of the proposed method, we will apply it to a gene
regulatory network assembled in \cite{deJ04} and transformed to a Petri net as well as a Boolean
network in \cite{Ste07}.

\textit{B. subtilis} is a gram positive soil bacterium. Under extreme environmental stress, it
produces a single endospore, which can survive ultraviolet or gamma radiation, acid, hours of
boiling or long periods of starvation. The bacterium leaves the vegetative growth phase in favour
of a dramatically changed and highly energy consuming behaviour, and it dies at the end of the
sporulation process. This corresponds to a switch between two clearly distinguished genetic
programs, which are complex but comparatively well understood.

By literature and database search, de Jong et al. \cite{deJ04} identified 12 main regulators,
constructed a model of piecewise linear differential equations and obtained realistic simulation
results. An exogenous signal (starvation) triggers the phosphorylation of the transcription factor
Spo0A to Spo0AP by the kinase KinA; this process is reversible by the phosphatase Spo0E. Spo0AP is
necessary to transcribe SigF, which regulates genes initiating sporulation and therefore is an
output of the model. The interplay with other transcription factors AbrB, Hpr, SigA, SigF, SigH and
SinR is graphically represented in \cite[Figure 3]{deJ04}; SinI inactivates SinR by binding to it.
SigA and Signal are considered as an input to the model and are always on. Table \ref{BoolEq} lists
the Boolean equations building the model in \cite{Ste07} (communicated by the author). They exhibit
a certain degree of nondeterminism, since the functions for the \textit{off} fluents sometimes are
not the negation of the \textit{on} functions. This accounts for incomplete or inconsistent
knowledge. In the case of state transitions determined by $k$ conflicting function pairs, we
generated $2^k$ output states.

\begin{table}
\caption{Boolean rules for the nutritional stress response regulatory network, derived in
\cite{Ste07} from \cite{deJ04}. $\overline{x} \hat{=} \neg x,\: x+y \hat{=} x \vee y,\: xy \hat{=}
x \wedge y$.} \label{BoolEq}\centering
\begin{tabular}{|lcl|}
\hline
AbrB &= &SigA $\overline{\text{AbrB}}$ $\overline{\text{Spo0AP}}$\\
$\overline{\text{AbrB}}$ &= &$\overline{\text{SigA}}$ + AbrB + Spo0AP\\
SigF &= &(SigH Spo0AP $\overline{\text{SinR}})$ + (SigH Spo0AP SinI)\\
$\overline{\text{SigF}}$ &= &(SinR $\overline{\text{SinI}}$) + $\overline{\text{SigH}}$ + $\overline{\text{Spo0AP}}$\\
KinA &= &SigH $\overline{\text{Spo0AP}}$\\
$\overline{\text{KinA}}$ &= &$\overline{\text{SigH}}$ + Spo0AP\\
Spo0A &= &(SigH $\overline{\text{Spo0AP}}$) + (SigA $\overline{\text{Spo0AP}}$)\\
$\overline{\text{Spo0A}}$ &= &($\overline{\text{SigA}}$ SinR $\overline{\text{SinI}}$) +
($\overline{\text{SigH}}$ $\overline{\text{SigA}}$ ) + Spo0AP\\
Spo0AP &= &Signal Spo0A $\overline{\text{Spo0E}}$ KinA\\
$\overline{\text{Spo0AP}}$ &= &$\overline{\text{Signal}}$ + $\overline{\text{Spo0A}}$ + Spo0E + $\overline{\text{KinA}}$\\
Spo0E &= &SigA $\overline{\text{AbrB}}$\\
$\overline{\text{Spo0E}}$ &= &$\overline{\text{SigA}}$ + AbrB\\
SigH &= &SigA $\overline{\text{AbrB}}$\\
$\overline{\text{SigH}}$ &= &$\overline{\text{SigA}}$ + AbrB\\
Hpr &= &SigA AbrB $\overline{\text{Spo0AP}}$\\
$\overline{\text{Hpr}}$ &= &$\overline{\text{SigA}}$ + $\overline{\text{AbrB}}$ + Spo0AP\\
SinR &= &(SigA $\overline{\text{AbrB}}$ $\overline{\text{Hpr}}$ $\overline{\text{SinR}}$ $\overline{\text{SinI}}$ Spo0AP) +\\
&&(SigA $\overline{\text{AbrB}}$ $\overline{\text{Hpr}}$ SinR SinI Spo0AP)\\
$\overline{\text{SinR}}$ &= &$\overline{\text{SigA}}$ + AbrB + Hpr + (SinR $\overline{\text{SinI}}$) + ($\overline{\text{SinR}}$ SinI) + $\overline{\text{Spo0AP}}$)\\
SinI &= &SinR\\
$\overline{\text{SinI}}$ &= &$\overline{\text{SinR}}$\\
SigA &= &TRUE (input to the model)\\
Signal &= &TRUE or FALSE (constant, depending on the input state)\\
\hline
\end{tabular}
\end{table}

\subsection{Simulation Starting from a State Typical for the Vegetative Growth Phase}\label{simNoStress}
We performed supplementary analyses of the transitions starting from a typical state without the
starvation signal \cite[Table 4]{Ste07}. The concept lattice for the resulting transitive context
(Table \ref{tab:transitCxt}, with a part of the attribute set only) is shown in Figure
\ref{fig:noStress}. The larger circles at the bottom represent \textit{object concepts}; their
extents (highlighted part) are the four single transitions with the input state at $t=0$ or $t=2$, and the intents are
all attributes above a concept. Thus for instance, the two latter transitions have the attribute
KinA.in.on in common, designating the respective concept. Moreover, they are distinguished
unambigously from other sets of transitions by this  attribute - the concept  is \textit{generated}
by "KinA.in.on".

\begin{figure}
  \centering
  \includegraphics[width=12cm]{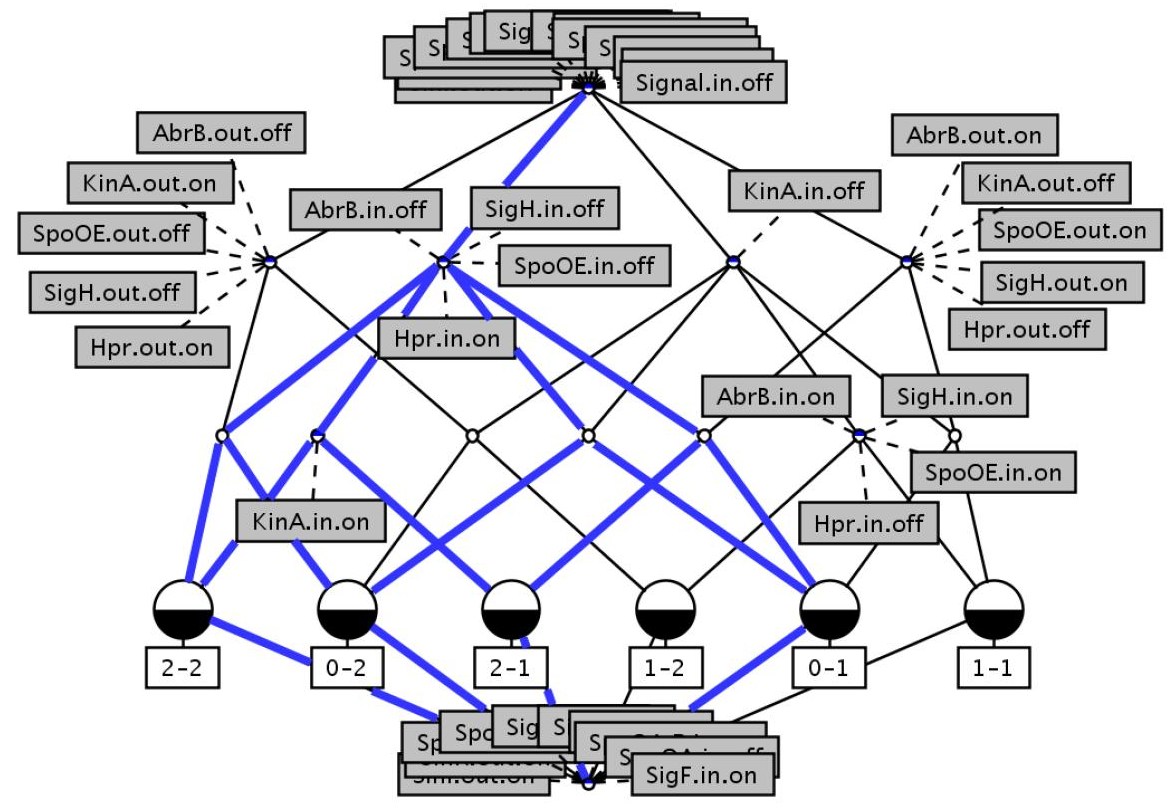}
  \caption{Concept lattice (computed and drawn with Concept Explorer) representing a simulation without nutritional stress.
\textit{Signal}: starvation; \textit{AbrB, Hpr, SigA, SigF, SigH, SinR, Spo0A} (phosporylated form
\textit{Spo0AP}): transcription factors; \textit{KinA}: kinase; \textit{Spo0E}: phosphatase;
\textit{SinI} inactivates \textit{SinR} by binding to it. $i$-$j$ indicates a transition
($\varphi_i^{in}, \varphi_j^{out})$. \textit{Bold / blue lines}: Filter (superconcepts) and ideal
(subconcepts) of the concept (\{($\varphi_0^{in}, \varphi_1^{out}), (\varphi_0^{in},
\varphi_2^{out}), (\varphi_2^{in}, \varphi_1^{out}), (\varphi_2^{in}, \varphi_2^{out})$\},
\{AbrB.in.off, SigH.in.off, SpoOE.in.off, Hpr.in.on\})} \label{fig:noStress}
\end{figure}

Implications of the stem base can be read from the lattice. For instance there are implications
between the generators of a concept:
\begin{equation}
<4> \text{ AbrB.in.off $\rightarrow$ SigH.in.off, SpoOE.in.off, Hpr.in.on}
\end{equation}
Analogous implications hold for the attributes of the conclusion, and there are implications
between attributes of sub- and superconcepts. $< 4>$ indicates that the rule is supported by four
transitions.

The bottom concept has an empty extent. Its intent is the set of attributes never occuring during
this small simulation. The top concept does not have an empty intent - as it is often the case -,
but it consists of 10 attributes common to all 6 transitions. The corresponding rule has an empty
body ($\top$):
\begin{equation}
\begin{aligned}
< 6 >\: \top \rightarrow\: &\text{Signal.in.off, SigA.in.on, SigF.in.off, Spo0A.in.on, Spo0AP.in.off,}\\
&\text{SinR.in.off, SinI.in.off, Signal.out.off, SigA.out.on, SigF.out.off, }\\
&\text{Spo0A.out.on, Spo0AP.out.off SinR.out.off SinI.out.off}
\end{aligned}
\end{equation}

Related to the simulation in the presence of nutritional stress, the transitive context has about
20 transitions, 500 concepts and 50 implications. In a such case it is more convenient to query the
implicational knowledge base. But also for the visualization of large concept hierarchies, there
exist more sophisticated tools like the ToscanaJ suite
[\texttt{http://sourceforge.net/projects/toscanaj/}].

\subsection{Analysis of All Possible Transitions}\label{subseq:allTrans}
In order to analyse the dynamics of the \textit{B. subtilis} network exhaustively, we generated
4224 transitions from all possible $2^{12}=4096$ initial states (thus the rules are nearly
deterministic). There were 11.700 transitions in the transitive context, from which we computed the
stem base containing 524 implications with support $>$ 0, but $11.023.494 \approx 2^{24}$ concepts.

It was not feasible to provide biological evidence for a larger part of the implications, within
the scope of this methodological study. This could be done by literature search, especially
automatic text mining, by new specialized experiments, or - in a faster, but less reliable way - by
comparison with high-throughput observed time series \cite[3.2]{Wol07}. Instead we will give
examples for classes of implications that can be validated or falsified during attribute
exploration in specific ways. We start with the examples of \cite[4.3]{Ste07}.
\begin{itemize}
  \item "For example, we know that in the absence of nutritional stress, sporulation should never be initiated \cite{deJ04}. We can use model checking to show this holds in our model by proving
that no reachable state exists with SigF = 1 starting from any initial state in which Signal = 0,
SigF = 0 and Spo0AP = 0." \cite[341]{Ste07} This is equivalent to the rule following from the stem
base:
\begin{equation}\label{imp:SigF.off}
  \text{Signal.in.off, SigF.in.off, Spo0AP.in.off} \rightarrow \text{SigF.out.off,}
\end{equation}
  \item SigF.out.on $\rightarrow$ KinA.out.off, Spo0A.out.off, Hpr.out.off, AbrB.out.off:\\
Spo0AP is reported to activate the production of SigF but also repress its own production (mutual
exclusion). \cite{deJ04}
  \item SigH.out.off $\rightarrow$ AbrB.out.off, SpoOE.out.off, SinR.out.off, SinI.out.off\\
All these genes are regulated $\overline{gene.out} =  \overline{\text{SigA.in}} + \text{AbrB.in}$ (+ ...).\\
\end{itemize}

In our approach, such dependencies and mutual exclusions can be checked systematically. We searched
the stem base for further interesting and simple implications:
\begin{align}
&< 4500 > \text{Spo0AP.in.on, KinA.out.off} \rightarrow \text{Hpr.out.off}\\
&< 4212 > \text{SigH.in.on. KinA.out.off} \rightarrow \text{Hpr.out.off}\\
&< 3972 > \text{AbrB.in.off, KinA.out.off} \rightarrow \text{Hpr.out.off}
\end{align}
$\overline{\text{Hpr}}$ and $\overline{\text{KinA}}$ are determined by different Boolean functions,
but they are coregulated in all states emerging from any input state characterized by the single
attributes Spo0AP.on, SigH.on or AbrB.on.
\begin{equation}\label{eq:mainRule}
\begin{aligned}
<3904> \text{AbrB.out.on } \rightarrow &\text{ SigA.in.on,
SigA.out.on, SigF.out.off,}\\
&\text{ Spo0A.out.on, Spo0E.out.on, SigH.out.on,}\\
&\text{ Hpr.out.off, SinR.out.off, SinI.out.off}
\end{aligned}
\end{equation}
AbrB is an important "marker" for the regulation of many genes, which is understandable from the
Boolean rules with hindsight. By a PubMed query, a confirmation was found for downregulation of
SigF together with upregulation of AbrB
\cite{Tom03}.

Finally we entered sets of interesting attributes as facts into the PROLOG knowledge base, such
that a derived implication was computed\footnote{and accordingly the closure of the attribute set}.
Complementary to (\ref{imp:SigF.off}), we searched after conditions for the switch towards
sporulation (SigF.out.on) and found the implication
\begin{equation}
\begin{aligned}
&\text{SigF.in.off, Spo0AP.in.off, SigF.out.on }\\
\rightarrow\: &\text{Signal.in.on. Signal.out.on, SigA.in.on, SigA.out.on, Spo0AP.out.off,}\\
&\text{Spo0A.out.off, AbrB.out.off, KinA.out.off, Hpr.out.off.}
\end{aligned}
\end{equation}

The latter four attributes follow immediately from the Boolean rules, but Spo0AP.out.off depends in
a more complex manner on the premises. It is also noteworthy that the class of input states
developing to a state with attribute SigF.out.on is only characterized by the common attributes
Signal.in.on and SigA.in.on, i.e. the initial presence or absence of no other gene is necessary for
the initiation of sporulation\footnote{For this complete simulation, the conditions Signal = SigA =
TRUE had been dropped, but they were supposed to be constant.}

\section{Discussion}\label{conclusion}
The present work translates observations and simulations of discrete temporal transitions into the
language of formal concept analysis. The application to a well studied gene regulatory network
showed how a model can be validated in a systematic way, by drawing clear and complete consequences
from the theory (the knowledge based network), and we found interesting new transition rules. The
approach could be expanded by accounting for the change of the network structure itself in strongly
different biological situations, e.g. with or without stress. Thus in ongoing work we adapt a
literature based network to observed transcriptome time series, resulting in two sets of Boolean
functions related to the stimulation of human fibroblast cells by the cytokines Tnf$\alpha$ or
Tgf$\beta$.

Until now we have established the foundation in order to exploit manyfold mathematical results of
FCA for the analysis of gene expression dynamics and of discrete temporal transitions in general.
An important question is: How can attribute exploration be split into partial problems, in this
special case? For instance, one could focus on a specific set of genes first, which is
understandable as a \textit{scaling} \cite[Definition 28]{GW99}. Then the decomposition theory of
concept lattices will be useful, which permits an elegant description by means of the corresponding
formal contexts. \cite[Chapter 4]{GW99}

The price of the logical completeness is its computational complexity. In this regard the status of
attribute exploration has not yet fully been clarified. Computation time strongly depends on the
logical structure of the context, and there exist cases where the size of the stem base is
exponential in the size of the input \cite{Kuz06}. However, deriving an implication from the stem
base is possible in linear time, related to the size of the base, and the PROLOG queries in Section
\ref{subseq:allTrans} were very fast. As demonstrated in Section \ref{subsec:bg}, attribute
exploration can be shortened by background knowledge. Further it will be crucial to decide
implications without the necessity to generate all possible transitions. For that purpose, model
checking \cite{Esp94} could be a promising approach, or the structural and functional analysis of
Boolean networks by an adaptation of metabolic network methods in \cite{Kla06}. There, determining
activators or inhibitors corresponds to the kind of rules found by our method, and logical steady
state analysis indicates which species can be produced from the input set and which not. An
exciting direction of research would be to conclude dynamical properties of Boolean networks from
their structure and the transition functions, e.g. by regarding them as polynomial dynamical
systems over finite fields \cite[Section 4]{Lau05} and by exploiting theoretical work in the
context of cellular automata \cite[Section 6]{Lau05}.

The present work is a first step to use the potential of formal concept analysis for solving
questions within systems biology. As indicated, many directions of research are possible. We
encourage their investigation and are open to any collaboration with mathematicians, computer
scientists or (systems) biologists.

\subsubsection*{Acknowledgement.} The work was supported by the German Federal Ministry of Education
and Research BMBF (FKZ 0313652A).

\begin{footnotesize}\subsubsection*{Published in:} K. Horimoto et al. (Eds.): AB 2008, LNCS 5147, pp. 230-244.\\
\textcopyright$\:$ Springer-Verlag Berlin Heidelberg 2008                                                      \end{footnotesize}

\end{document}